\documentclass[nohyperref]{article}
\usepackage{microtype}
\usepackage{graphicx}
\usepackage{subfigure}
\usepackage{booktabs}

\usepackage{algorithmic}
\usepackage{algorithm2e}

\usepackage[sort, numbers]{natbib}
\setcitestyle{square}

\usepackage{amsmath}
\usepackage{amssymb}
\usepackage{mathtools}
\usepackage{amsthm}

\usepackage[capitalize, noabbrev]{cleveref}
\usepackage{tikz}
\usetikzlibrary{arrows, automata}

\theoremstyle{plain}
\newtheorem{theorem}{Theorem}[section]

\theoremstyle{definition}
\newtheorem{definition}[theorem]{Definition}

\theoremstyle{remark}

\newcommand{\indep}{\perp \!\!\! \perp}
\newcommand{\MM}{\mathcal{M}}

\newcommand{\U}{\mathbf{U}}

\newcommand{\V}{\mathbf{V}}
\newcommand{\C}{\mathbf{C}}

\renewcommand{\u}{\mathbf{u}}
\newcommand{\X}{\mathbf{X}}
\newcommand{\x}{\mathbf{x}}

\renewcommand{\c}{\mathbf{c}}
\newcommand{\CC}{\mathcal{C}}
\newcommand{\DD}{\mathcal{D}}
\newcommand{\GG}{\mathcal{G}}

\author{Jason Saporta}
\title{Credible Bounds for Causal Quantities with Continuous Outcomes}
\date{%
    Department of Statistics, Iowa State University, Ames, Iowa, USA\\
    jsaporta@iastate.edu
}
\begin{document}
\maketitle
\begin{abstract}
  The problem of partial identification concerns bounding causal quantities that remain unidentifiable given the observed distribution and the causal diagram of the underlying structural causal model (SCM). While bounds have been developed for certain causal quantities with continuous outcomes, they are necessarily univariate and are unable to reflect differing levels of confidence. Building on previous work in partial identification, we propose a simple Bayesian method for deriving probabilistic bounds on a large class of causal quantities with potentially multivariate and/or continuous outcome variables.
\end{abstract}
\begin{center}
\textbf{Acknowledgements}
\end{center}

This work was partially funded  by the Center for Statistics and Applications in Forensic Evidence (CSAFE) through Cooperative Agreements 70NANB15H176 and 70NANB20H019 between NIST and Iowa State University, which includes activities carried out at Carnegie Mellon University, Duke University, University of California Irvine, University of Virginia, West Virginia University, University of Pennsylvania, Swarthmore College and University of Nebraska, Lincoln.

\begin{center}
\textbf{Keywords}
\end{center}

Bayesian analysis, causal inference, partial identification
\clearpage
\section{Introduction}
\label{intro}
Inferring causal relationships from observational data is a central task in the modern study of causality \cite{causality2e}. For example, given a binary treatment variable $X$ and an outcome variable $Y$, the average causal effect of $X$ on $Y$ is given by \[ACE[X \rightarrow Y] = E[Y | do(X = 1)] - E[Y | do(X = 0)].\]
This quantity might be used, for example, to establish the benefit of a new drug ($X = 1$) on a health outcome ($Y$) compared to a placebo ($X = 0$).

The $ACE$ and other causal quantities depend on \textit{interventional distributions} such as $p(y | do(x))$, representing the distribution of $Y$ when $X$ is fixed at a particular value $x$, regardless of the surrounding circumstances. This, in general, is not equal to the corresponding conditional distribution
\[ p(y | do(x)) \neq p(y | x) \]
due to potential unobserved confounding variables that influence both $X$ and $Y$. Details about the meaning of interventional distributions are covered in Section \ref{bg}.

Unlike conditional distributions, interventional distributions are not always uniquely determined (called \textit{identifiable} in the causality literature) by the joint distribution $p(x, y)$. Even if we are also given a \textit{causal diagram} (see Section \ref{bg}), $p(y | do(x))$ may remain unidentifiable. Algorithms based on the \textit{do}-calculus and other methods are now able to efficiently ascertain whether these probabilities are identifiable given a causal diagram and joint distribution. \cite{10.5555/3020419.3020446}

Even when interventional distributions are not identifiable from observational data, we can still draw samples from them using \textit{experiments}, where the effects of potential confounding variables are eliminated through randomized treatment assignment and a controlled setting. In an experiment, therefore, the interventional and conditional distributions are equivalent, and we may derive causal quantities using standard probabilistic and statistical tools. Experiments may be prohibitively expensive, unethical, or otherwise infeasible to carry out, and therefore do not completely solve the problem of unidentifiability.

The problem of \textit{partial identification} concerns bounding unidentifiable causal quantities. There are two common approaches to this: Finding ``hard'' bounds through optimization, and finding probabilistic ``soft'' bounds using a Bayesian approach.

The hard bounding approach expresses the causal quantity of interest, such as $E[Y | do(X = x)]$ as a function of unobservable probabilities (corresponding to possible confounding variables); this function is then minimized (maximized) under constraints imposed by the observed data to produce a lower (upper) bound on the causal quantity. This is usually achieved using a linear optimization solver \cite{doi:10.1080/01621459.1997.10474074}. Hard bounds have been derived for causal effects with discrete interventional variables and arbitrary outcome variables \cite{zhang2021bounding}; for specific classes of models, bounds are also available under continuous interventions \cite{NEURIPS2020_e8b1cbd0}.

This approach has a number of weaknesses. (1) Though it produces provably tight bounds in the infinite-data limit, the finite-sample statistical properties of these bounds are not well understood. (2) Due to the nature of the optimization problem, the method only provides bounds for one scalar outcome variable at a time; there is no concept of a truly multidimensional bounding region. Even to get a ``rectangular'' region (ignoring how outcome variables covary), a separate run of the optimizer is necessary for each dimension. (3) Finally, there is no way to scale the resulting bounds to account for different levels of confidence or belief; there is no analogue of a 95\% confidence interval, for instance.

By contrast, the Bayesian soft bounding approach \cite{cp1997} entails putting a prior distribution on the latent variables and parameters of a causal model, then sampling from the posterior predictive interventional distribution $p(y | do(x), \DD)$. From there, bounds can be derived using credible sets as per standard Bayesian practice \cite{gelman2013bayesian}. The Bayesian approach has been used to derive bounds for counterfactual queries on graphs of arbitrary complexity, with the caveat that all observed variables be discrete \cite{zhang2021pi}.

Our contribution is to provide a unified posterior sampling method covering a large class of unidentifiable interventional distributions, from which credible bounds on many causal quantities can be derived. In particular, our method can sample from (possibly multivariate) interventional distributions with continuous outcome variables. The method is based on the novel concept of canonicalizable components, which allows us to leverage the canonical representation \cite{eoci} for parts of an SCM, but not necessarily the whole thing. To the best of our knowledge, this is the first time this has been achieved across such a wide range of causal models.

\subsection{Background}
\label{bg}
A \textbf{Structural Causal Model} $\MM = \langle \U, \V, F, P_\U \rangle$ \cite{causality2e} consists of a finite set of independent ``exogenous'' random variables $\U$ defined on an arbitrary measurable space with distribution $P_\U$, a finite set of derived ``endogenous'' variables $\V$, and a set of functions $F = \{f_V: V \in \V\}$ defined such that
\[ V = f_V(Pa_V, U_V) \]
for ``parents'' $Pa_V \subseteq \V \setminus V$ and $U_V \subseteq \U$. We will assume that all SCMs discussed here are \textbf{acyclic} and define a (not necessarily unique) ordering over $\V$ such that, given $\U = \u$, we are able to uniquely compute the value of each variable $V \in \V$ according to that ordering. This is called a \textbf{topological ordering} of $\V$ and can be found from an SCM with standard algorithms \cite{clrs}.

Whereas an SCM uniquely defines a probability distribution $P_\V$ over the endogenous variables, the converse is not true \cite{pch}. SCMs also uniquely define interventional distributions. If $X, Y \in \V$, for example, we can sample from $p(y | do(X = x))$ by first replacing $f_X$ with the interventional value $x$, then sampling from $\V$ as described above until we arrive at $Y$. With a fully-specified SCM, all causal quantities are therefore identifiable.

However, SCMs are almost never considered to be completely specified. A particularly important source of difficulty when working with SCMs is that $\U$ is usually considered to be an unobservable random variable with an unknown distribution on an arbitrary measurable space. Because of this, it is inappropriate to associate computationally convenient distributions with $\U$ without additional thought; a Gaussian ``prior'' distribution, for example, assigns zero probability to a large class of SCMs that may be compatible with the observed distribution $p(\mathbf{v})$ and causal diagram $\GG$. Therefore, even a seemingly uninformative prior like a Gaussian with large variance in fact encodes vast amounts of information and must be justified. Similar considerations apply to the set of functions $F$. We will consider this problem in more detail in Section \ref{cc}.

Given an SCM $\MM$, its \textbf{causal diagram} $\GG$ is an acyclic directed mixed graph (ADMG) constructed as follows:
\begin{enumerate}
  \item Find a topological ordering of $\V$.
  \item Iterating accordingly, for each $V$ add a node to $\GG$, and add a directed edge from each member of $Pa_V$ to $V$.
  \item For each pair of variables $(V_i, V_j)$, add a bidirected edge connecting $V_i$ and $V_j$ if the arguments of $f_{V_i}$ and $f_{V_j}$ share an element of $\U$.
\end{enumerate}
Intuitively, the bidirected edges of $\GG$ correspond to potential unobserved confounding between the observable variables $\V$.

Given an SCM $\MM$, \cite{TianJin2002Agic} defined a partition $\CC$ of $\V$ according to the following relation: $V_i \approx V_j$ if $i = j$ or there exists a bidirected path between $V_i$ and $V_j$ in the causal diagram $\GG$. (The relation $\approx$ is clearly reflexive, symmetric, and transitive.) Each member $\C \in \CC$ is called a \textbf{confounded component} (c-component). We will treat these as random vectors taking on values $\c \in \Omega_\C$ and having parents $Pa_\C = \bigcup_{V \in \C} Pa_V$ taking on values $pa_\C \in \Omega_{Pa_\C}$.

\section{Canonicalizable Components}
\label{cc}
In the standard partial identification problem setting, which we also follow here, we are provided with $N$ \textit{iid} samples from the observational distribution $p(\mathbf{v})$ together with a causal diagram $\GG$. We view these as reflecting an underlying SCM representing the full data-generating process, but as described earlier, we lack key information about $\U$ and $F$ necessary for a full specification.

If all observed variables $V \in \V$ have finite support, the \textbf{canonical representation} of an SCM $\MM$ \cite{eoci} allows us to view the different possible values for $\U$ as selecting for a set of functional assignment mechanisms for each $V$. That is, for a given $V = f_V(Pa_V, U_V)$, the value of $U_V$ does nothing but specify how values of $Pa_V$ get mapped to values of $V$. When all $V$ are finite, the number of possible functions specifying their values is finite as well.

Given an SCM $\MM = \langle \U, \V, F, P_\U \rangle$ and some $V \in \V$, define $F_V$ to be the set of functions mapping from the support of values of $Pa_V$ to values of $V$. If $Pa_V = \emptyset$, then interpret this set as just the support $\Omega_V$:
\[ F_V =  \left\{
\begin{array}{ll}
  \{ f: \Omega_{Pa_V} \mapsto \Omega_V \} & Pa_V \neq \emptyset \\
  \Omega_V & Pa_V = \emptyset \\
\end{array} 
\right. \]

Under the canonical representation, then, the previously intractable $\U$ can simply be viewed as having a categorical distribution; each of its possible values $\mathbf{u}$ then specifies which member of each $F_V$ is used to generate the values of $V$ from its parents.

Using this insight, \cite{cp1997} put a Dirichlet prior distribution on the probability vector $\pi$ underlying the distribution of $\U$, which essentially constituted a prior on the whole space of SCMs compatible with the causal diagram they were considering (see Figure \ref{fig1}). Using a Gibbs sampler, they approximated the posterior distribution $p(\pi | \DD)$, and were thereby able to approximate the posterior distribution of potentially unidentifiable causal quantities like the $ACE$.

\begin{figure}
  \centering
  \begin{tikzpicture}[->,>=stealth',shorten >=1pt,auto,node distance=2.8cm,
                  semithick,scale=1,transform shape]
    \node[state]    (A)                    {$Z$};
    \node[state]    (B) [right of=A]       {$D$};
    \node[state]    (Y) [right of=B]       {$Y$};
    
    \path (A) edge node {} (B)
          (B) edge node {} (Y)
          (Y) edge[dashed, <->] [bend right] node {} (B);
  \end{tikzpicture}
  \caption{The causal diagram of the ``partial compliance'' model \cite{cp1997}. All 3 variables are binary-valued. Utilizing the canonical representation and a Dirichlet prior, bounds on $ACE[D \rightarrow Y]$ were created by sampling from $p(y | do(d), \DD)$, for $\DD = \{(z_n, d_n, y_n): 1 \leq n \leq N \}$.}
  \label{fig1}
\end{figure}

The previous discussion, however, only applies to situations where each $V$ has finite support, or equivalently, when $\sum_{V \in \V} |F_V| < \infty$. When this fails, there are an infinite number of possible functions linking the various variables, $\U$ cannot be viewed as having a categorical distribution, and we can no longer think about putting a Dirichlet prior on some underlying $\pi$.

By contrast, our approach is not to require that all $V$ be finite, but rather only those in the vicinity of each intervention variable $X$. These $V$ and their corresponding $U_V$ can then be ``canonicalized'' in a similar fashion to \cite{cp1997}. The remainder of the SCM can be treated without regard to possible confounders or other causal structure.

This will allow us to leverage the benefits of the canonical representation for approximating causal quantities while weakening the conditions required for its application.

\begin{definition}[Canonicalizable Component]
\label{def1}
  A c-component $\C \subseteq \V$ is called \textbf{canonicalizable} if $N_\C = \prod_{V \in \C} |F_V| < \infty$.
\end{definition}
Given a canonicalizable component $\C \in \CC$, there are $N_\C$ possible assignments of functions to variables $V \in \C$. Let $\mathbf{\pi_\C}$ be a probability vector of length $N_\C$, and define $U_\C\sim \textrm{Categorical}(\pi_\C)$. We can create a bijective correspondence between the values of $U_\C$ and these function assignments.

While prior work has discretized SCMs ``one c-component at a time'', it nonetheless assumed that the entire SCM can be converted to the canonical representation, and therefore had no need to differentiate between those components which can be made to follow the canonical representation and those which can't in a general SCM. \cite{zhang2021pi}

We now establish that c-components, canonicalizable or not, can in some sense be separated and considered independently from each other. This will allow us to approximate the posteriors of multiple $\pi_\C$ in parallel if there are multiple canonicalizable components.

\begin{theorem}
\label{thm1}
  Consider an SCM $\MM = \langle \U, \V, F, P_\U \rangle$ with associated causal diagram $\GG$. Let $\C \in \CC$ be a c-component of $\GG$, and define $\GG_\lambda$ as the graph $\GG$ with an added node $\lambda$, edges pointing from $\lambda$ to each $V \in \C$, and all bidirected edges connected to $V \in \C$ deleted.

Then given $\C \cup Pa_\C$, $\lambda$ is d-separated from all other variables $Z \not \in \C \cup Pa_\C$.
\end{theorem}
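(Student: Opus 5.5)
We argue directly from d-separation on the ADMG $\GG_\lambda$, treating bidirected edges as latent common causes (equivalently, a bidirected edge $A \leftrightarrow B$ may be traversed on a path with arrowheads at both ends). Set $S = \C \cup Pa_\C$ as the conditioning set. We show that every path from $\lambda$ to some $Z \notin S$ is blocked by $S$. Since $\lambda$ has only outgoing edges, and these point into members of $\C$, every path starts $\lambda \rightarrow V_1$ with $V_1 \in \C \subseteq S$. We then trace the path until it first leaves $\C$. Such an exit must occur, because $Z \notin \C$. The path therefore has the form $\lambda \rightarrow V_1 \cdots V_k - W \cdots Z$ with $V_1,\dots,V_k \in \C$ and $W \notin \C$.

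Next we classify the edge $V_k - W$. In $\GG_\lambda$ every bidirected edge touching $\C$ has been deleted, so this edge is directed. There are two cases.

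\emph{Case $W \rightarrow V_k$.} Then $W \in Pa_{V_k} \subseteq Pa_\C$, so $W \in S$. If $W = Z$, this contradicts $Z \notin S$. Otherwise $W$ is an interior node whose edge toward $V_k$ is a tail, so $W$ is not a collider on the path. It is a conditioned non-collider, and the path is blocked at $W$.

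\emph{Case $V_k \rightarrow W$.} Then $V_k$ has an outgoing edge on the path, so $V_k$ is a non-collider there, and $V_k \in \C \subseteq S$. The path is blocked at $V_k$. This includes the subcase $k=1$: there $V_1$ has the edge $\lambda \rightarrow V_1$ into it and $V_1 \rightarrow W$ out of it, so it is a chain node and is conditioned on.

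The main obstacle is the possibility that some $V_i$ with $i<k$ is a collider inside $\C$, which would open the path at that point. Conditioning on colliders in $\C$ is harmless, however, because blocking needs only a single blocked node, and we obtain one at the exit point in both cases. Within $\C$ the path can use only directed edges (bidirected edges are gone), and all of its nodes lie in $S$. So whatever happens before the exit, the node at the exit boundary ($V_k$ or $W$) is a conditioned non-collider. One degenerate point needs care: if $W \in Pa_\C$ and also $W = Z$, the statement excludes this because $Z \notin Pa_\C$. We also note that paths may not repeat nodes, so the first-exit decomposition is well defined. Together these cases show that every path is blocked, which proves the d-separation claimed in Theorem~\ref{thm1}.
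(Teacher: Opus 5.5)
Your proof is correct and uses the same idea as the paper's three-case argument. In $\GG_\lambda$ every edge incident to a node of $\C$ is directed, so the path either continues out of a $\C$-node, which is then a conditioned chain node, or enters it from a member of $Pa_\C$, which is then a conditioned non-collider. The only difference is that the paper applies this directly to the edge right after $V_1$ rather than at the first exit from $\C$, so your first-exit bookkeeping and the concern about colliders inside $\C$ are harmless but unnecessary.
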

\begin{proof}
Consider a variable $Z \not \in \C \cup Pa_\C$, and consider any path between $\lambda$ and $Z$. This path must follow one of three forms for $V \in \C$, $W \in \C \cup Pa_\C$:
\begin{enumerate}
  \item $\lambda \rightarrow V \rightarrow W \rightarrow \ldots \leftrightarrow Z$
  \item $\lambda \rightarrow V \leftarrow W \leftarrow \ldots \leftrightarrow Z$
    \item $\lambda \rightarrow V \leftarrow W \rightarrow \ldots \leftrightarrow Z$
\end{enumerate}
Note that the $\leftrightarrow$ pointing to/from $Z$ is not a bidirected edge in the sense of causal diagrams, but rather represents either a $\leftarrow$ or a $\rightarrow$.

In Case 1, $\lambda \rightarrow V \rightarrow W$ constitutes a chain structure. $V \in \C$, so this path is blocked given $\C \cup Pa_\C$.

In Case 2, $V \leftarrow W \leftarrow \ldots$ is also a chain. Since $W \in \C \cup Pa_\C$, this path is also blocked.

Finally, in Case 3, $V \leftarrow W \rightarrow \ldots$ is a fork, and therefore such a path is blocked for the same reason as in Case 2.
\end{proof}

For a canonicalizable component $\C$, ``canonicalization'' amounts to creating a new node $\lambda = U_\C$ pointing to each $V \in \C$ and deleting the associated bidirected edges. If we now consider another c-component $\C'$ with associated latent categorical $U_{\C'}$, $U_\C$ and $U_{\C'}$ are guaranteed to be d-separated, and therefore independent, given $\V$. If we also create $\pi_\C$ and $\pi_{\C'}$ pointing to $U_\C$ and $U_{\C'}$, respectively, it's clear that $\pi_\C \indep \pi_{\C'} | \V$ and can be reasoned about in isolation.
\section{Posterior Predictive Sampling from Interventional Distributions}
\label{ppsftid}
\subsection{Preliminaries}
\label{prelim}
Consider a dataset $\DD = \{v_n: V \in \V, n \in 1,\ldots,N\}$ representing $N$ \textit{iid} samples from $p(\mathbf{v})$, and an associated causal diagram $\GG$. We view these inputs as resulting from an underlying, unseen SCM.

Let $\X \subseteq \V$, and consider a set of intervention values $\x \in \Omega_\X$. Because the set $\CC$ of c-components partitions $\V$, each $X \in \X$ is also a member of a c-component $\C_X \in \CC$. These are not necessarily unique; for $X, X' \in \X$, it's possible that $\C_X = \C_{X'}$. Let $\CC_\X = \{\C_X: X \in \X\}$.

We assume that each such $\C_X$ is canonicalizable as per Definition \ref{def1}. Though this limits the set of possible interventional distributions we can sample from, it is necessary in order to achieve truly uninformative priors on the latent confounders.

We now modify the causal diagram $\GG$ in order to obtain the benefits from Theorem \ref{thm1}:
\begin{itemize}
  \item For each $\C \in \CC_\X$, delete all bidirected edges and introduce a new $U_\C$ node with outgoing edges pointing to each $V \in \C$. Also introduce a node $\pi_\C$, with a single edge pointing to $U_\C$.
  \item For all other variables $V \not \in \bigcup_{X \in \X} \C_X$, delete all bidirected edges. Optionally, we can choose to add a new node $\theta$ pointing to these variables representing statistical model parameters equipped with a prior distribution. Treating these remaining $V$ as a single c-component, applying Theorem \ref{thm1} shows that $\theta$ can be considered in isolation from the various $\pi_{\C_X}$.
\end{itemize}
Denote this new graph as $\GG^*$. Note that $\GG^*$ has no bidirected edges and is now a standard DAG. Also note that outside the $\C_X$, a detailed specification of confounding associations is unnecessary.

\subsection{Sampling From the Posterior Mixture}
\label{mix}
Our goal is to sample from
\begin{align*}
  p(\mathbf{v} | do(\X = \x^*), \DD) = \int &p(\mathbf{v} | do(\X = \x^*), u_\X)\\
                                            &\cdot p(u_\X | \DD) d(u_\X),
\end{align*}
where $U_\X = \bigcup_{\C \in \CC_\X} U_\C$. Because each $\C_X$ is canonicalizable, each of these $U_\C$ is well-defined as described in Section \ref{cc}.

This is simply a mixture model; to sample from it, we first draw from the posterior distribution $p(u_\X | \DD)$, and then from the corresponding component $p(\mathbf{v} | do(\X = \x^*), u_\X)$. In order to do this, we first leverage Theorem \ref{thm1} to factor the posterior:
\begin{align*}
p(u_\X | \DD) &= \prod_{\C_X} p(u_{\C_X} | \DD)\\
              &= \prod_{\C_X} p(u_{\C_X} | \pi_{\C_X}) p(\pi_{\C_X} | \DD)\\
              &= \prod_{\C_X} p(u_{\C_X} | \pi_{\C_X}) \cdot \prod_{\C_X} p(\pi_{\C_X} | \DD) 
\end{align*}
Our first goal is to approximate each $p(\pi_{\C_X} | \DD)$; as noted before, this can be done in parallel. Algorithm \ref{pi-post} uses a Gibbs sampler to get $M$ draws from each of these posteriors after $B$ burn-in iterations; by conditional independence, this is equivalent to a draw from joint posterior over all $\pi_{\C_X}$.\footnote{Dependence between successive draws in a Gibbs sampler can be mitigated by determining the lag between two draws necessary for them to be approximately decorrelated.}

In Algorithm \ref{pi-post}, note that...
\begin{itemize}
  \item The \textit{mask} operation corresponds to what \cite{cp1997} call ``compatibility''; given $\c_n, pa_{\C, n}$, $U_{\C, n}$ cannot take on values corresponding to function assignments that disagree with the data. We capture this by using a bit mask to zero out probabilities in $\pi_\C$ corresponding to invalid assignments, then rescaling it to ensure it sums to $1$. Note that a small $\epsilon$ value may need to be added to each element of $\pi_\C^*$ before rescaling to ensure numerical stability; in practice this can be made small enough not to significantly alter the functioning of the algorithm.
  \item On each iteration, each $\pi_{\C_X}$ follows a Dirichlet distribution due to Dirichlet-categorical conjugacy.
\end{itemize}

\begin{algorithm}[tb]
  \caption{$\pi_\C | \DD$ Gibbs Sampler}
  \label{pi-post}
\begin{algorithmic}[1]
  \item[] {\bfseries Input:} $\c_{1,\ldots,N}, pa_{\C, 1,\ldots,N}$, $B$, $M$
  \STATE $\alpha \gets $vector of ones with length $N_\C < \infty$
  \STATE $\pi_\C \gets draw(Dirichlet(\alpha))$
  \FOR{$i=1$ {\bfseries to} $B + M$}
    \FOR{$n = 1$ {\bfseries to} $N$}
      \STATE $\pi^*_\C \gets mask(\c_n, pa_{\C, n}) \odot \pi_\C$
      \STATE $\pi^*_\C \gets \pi^*_\C / sum(\pi^*_\C)$
      \STATE $U_{\C, n} \gets draw(Categorical(\pi^*_\C))$
      \STATE $\alpha[U_{\C, n}] \mathrel{+}= 1$
    \ENDFOR
    \STATE $\pi_\C \gets draw(Dirichlet(\alpha))$
    \STATE $\alpha \gets $vector of ones with length $N_\C < \infty$
    \IF {$i > B$}
      \STATE Store $\pi_\C$
    \ENDIF
  \ENDFOR
  \STATE Return the $M$ sampled $\pi_{\C}$ vectors
\end{algorithmic}
\end{algorithm}

Now suppose that for each $X \in \X$, we have a sampled value for $\pi_{\C_X}$. We can then sample values for each corresponding $U_{\C_X}$. We can now move on to sampling from the mixture component $p(\mathbf{v} | do(\X = \x^*), u_\X)$. First, write out the factors corresponding to each $V \in \V$:
\begin{align*}
  p(\mathbf{v} | do(\X = \x^*), u_\X) =& \prod_{X \in \X} \mathbf{1}_{X = x^*} \prod_{V \not \in \bigcup_{\X} \C_X} p(v | pa_V)\\
                                             \cdot& \prod_{\C \in \CC_X} \prod_{V \in (\C \setminus X)} p(v | pa_V, u_\mathbf{C})
\end{align*}
Rearranging this according to a topological ordering of $\V$, we can sample from the conditional distribution of each $V \in \V$ sequentially. In particular, note that...
\begin{itemize}
  \item If $V \in X$, then its value has already been determined by the intervention.
  \item If $V \not \in X$, but is in some $\C_X$, then its distribution $p(v | pa_V, u_{\C_X})$ is simply a point mass at the value of the function corresponding to $u_{\C_X}$ and evaluated at $pa_V$.
  \item Otherwise, we simply draw $V$ according to its conditional distribution as determined by $p(\mathbf{v})$, which can be estimated from our $N$ draws via statistical methods.
\end{itemize}
This sampling scheme is summarized in Algorithm \ref{v-post}.

\begin{algorithm}[t]
  \caption{$\V | do(\x^*), u_\X$ Sampler}
  \label{v-post}
\begin{algorithmic}[1]
  \item[] {\bfseries Input:} $\{\pi_{\C_X}: X \in \X\}, \x^*, \GG^*$
  \STATE $u_\X \gets draw \left(\prod_{\C \in \CC_\X} p(u_\C | \DD) \right)$
  \FOR{$V$ {\bfseries in} $topologicalSort(V, \GG^*)$}
    \IF{$V \in \mathbf{X}$}
      \STATE $V \gets \textrm{the appropriate intervention value } x^*$
    \ELSIF{$V \in \C_X$ for some $X \in \mathbf{X}$}
      \STATE $V \gets f_V(pa_V, u_{\C_X})$
    \ELSE
      \STATE $V \gets draw(p(V | pa_V, \theta))$
    \ENDIF
  \ENDFOR
  \STATE Return the sampled vector $\mathbf{v}$
\end{algorithmic}
\end{algorithm}

Using this, for each of the $M$ sets of $\pi_{\C_X}$ values found with Algorithm \ref{pi-post}, we can draw $J$ samples from the corresponding SCM with Algorithm \ref{v-post}, and by the Law of Large Numbers get arbitrarily good approximations to causal quantities such as the average causal effect seen in Section 
\ref{intro}. These may be multivariate and depend on variables $V$ with infinite support.

\subsection{Bounds}
\label{bounds}
Suppose a quantity $Q$ is univariate and we have $M$ draws from its (approximate) posterior $p(q|\DD)$. A simple way to find a $(1 - \alpha) 100\%$ credible interval for $Q$ is to take those values that lie between the $\left(\frac{\alpha}{2}\right) 100\%$ and $\left(1 - \frac{\alpha}{2}\right) 100\%$ quantiles of the sample. This is called a \textbf{central credible interval} \cite{gelman2013bayesian}. Similar quantile-based intervals can also be derived.

For multivariate $\mathbf{Q}$, there is no analogue of a central credible interval. We can, however, find a $(1 - \alpha) 100\%$ \textbf{highest posterior density region}. This means finding a region $A \subseteq \Omega_\mathbf{Q}$ such that $P(\mathbf{Q} \in A | \DD) = 1 - \alpha$ and $A$ has minimum total volume (the HPD region may consist of several disjoint parts for multimodal distributions). Algorithms that achieve this tend to be more computationally demanding and mathematically involved compared to quantile-based intervals. \cite{10.2307/2684423}, \cite{10.1214/09-AOS766}
\section{Extensions}
\label{ex}
In addition to the standard ``hard'' interventions like $do(X = x)$, we can also consider the distribution of $Y$ under ``soft'' interventions such as assigning $X$ a new function of other variables in the model or a new conditional distribution \cite{correa2020calculus}, \cite{eoci}. Only minor modifications to Algorithm \ref{v-post} are necessary in order to sample from the corresponding interventional distributions:
\begin{itemize}
  \item To perform a functional intervention $do(X = f_X^*(pa^*_X))$, the ordering of $\V$ in Algorithm \ref{v-post} must comply with the post-intervention diagram, which is obtained by taking the original diagram, deleting the edges from $Pa_X \rightarrow X$ and adding the edges from $Pa^*_X \rightarrow X$. $X$ is then assigned the value of $f_X^*(pa^*_X)$ in Algorithm \ref{v-post} at the appropriate iteration.
  \item A stochastic intervention $do(X \sim p^*(x | pa^*_X))$ operates just as a functional intervention, except $X$ is drawn from $p^*$ rather than assigned a value according to a function.
\end{itemize}
In both cases, the requirement that $\C_X$ be canonicalizable in the original causal diagram remains unchanged. In addition, the post-intervention graphs must remain acyclic. The posterior distributions $p(\pi_\C | \DD)$ are learned as before, and the support $\Omega_X$ of $X$ cannot expand to include new values not possible for $X$ in the pre-intervention distribution.
\section{Experiments}
\label{exps}
\subsection{Partial Compliance $+$ Continuous Noise}
\label{pccn}
\begin{figure}
  \centering
  \begin{tikzpicture}[->,>=stealth',shorten >=1pt,auto,node distance=2.8cm,
                      semithick,scale=0.8,transform shape]
    \node[state]    (Z)                    {$Z$};
    \node[state]    (D) [right of=Z]       {$D$};
    \node[state]    (Y) [right of=D]       {$Y$};
    \node[state]    (W) [right of=Y]       {$W$};
    
    \path (Z) edge node {} (D)
          (D) edge node {} (Y)
          (Y) edge[dashed, <->] [bend right] node {} (D)
              edge node {} (W);
  \end{tikzpicture}
  \caption{The causal diagram used for Experiment 1. $Z$, $D$, and $Y$ are binary, $W$ is continuous. Compare to Figure \ref{fig1}.}
  \label{ex1cd}
\end{figure}

\begin{figure}
  \centering
  \begin{tikzpicture}[->,>=stealth',shorten >=1pt,auto,node distance=2.8cm,
                      semithick,scale=0.8,transform shape]
    \node[state]    (Z)                    {$Z$};
    \node[state]    (D) [right of=Z]       {$D$};
    \node[state]    (Y) [right of=D]       {$Y$};
    \node[state]    (W) [right of=Y]       {$W$};
    \node[state]    (U) [above of=D]       {$U_\C$};
    \node[state]    (P) [above of=Z]       {$\pi_\C$};
    
    \path (Z) edge node {} (D)
          (D) edge node {} (Y)
          (Y) edge node {} (W)
          (U) edge node {} (D)
              edge node {} (Y)
          (P) edge node {} (U);
  \end{tikzpicture}
  \caption{The modified causal diagram $\GG^*$ used for Experiment 1.}
  \label{ex1mcd}
\end{figure}
Figure \ref{fig1} shows the causal diagram of the partial compliance model used by \cite{cp1997}. It consists of three binary variables, organized into two canonicalizable c-components. In order to sample from the posterior distribution of the unidentifiable $ACE[D \rightarrow Y]$, they proposed the equivalent of Algorithm \ref{pi-post} with $\X = \{D\}$, so $\C_X = \{D, Y\}$.

The distribution in Table \ref{exp1table} to validate the approach. This distribution is constructed in such a way that $ACE[D \rightarrow Y]$ is in fact identifiable; it has the value $0.55$.

\begin{table}[b]
  \caption{$p(z, d, y)$ in Experiment 1}
  \label{exp1table}
  \vskip 0.15in
\begin{center}
\begin{small}
\begin{sc}
\begin{tabular}{l|l|l|l}
\toprule
  $Z$ & $D$ & $Y$ & $p(\cdot)$\\
\midrule
  $0$ & $0$ & $0$ & $0.275$ \\
  $0$ & $0$ & $1$ & $0.0$ \\
  $0$ & $1$ & $0$ & $0.225$ \\
  $0$ & $1$ & $1$ & $0.0$ \\
  $1$ & $0$ & $0$ & $0.225$ \\
  $1$ & $0$ & $1$ & $0.0$ \\
  $1$ & $1$ & $0$ & $0.0$ \\
  $1$ & $1$ & $1$ & $0.275$ \\
\bottomrule
\end{tabular}
\end{sc}
\end{small}
\end{center}
\vskip -0.1in
\end{table}

We extend the partial compliance model with the simplest possible addition that demonstrates our contribution. Figure \ref{ex1cd} shows the same model, but with an additional variable $W$. We suppose that $W \sim Normal(Y, 10^{-4})$. That is, $W$ is just $Y$ with a negligible amount of continuous noise. Clearly, the true $ACE[D \rightarrow W] = ACE[D \rightarrow Y] = 0.55$.

Figure \ref{ex1mcd} shows the modified diagram $\GG^*$. Because $p(z)$ and $p(w | y)$ are both known in this example, it is not necessary to learn additional parameters $\theta$; they have therefore been excluded from the graph.\footnote{Alternatively, we can imagine $\theta$ being added to the graph as described in Section \ref{prelim}. Because $p(z)$ and $p(w | y)$ don't share parameters, we can then factor that $\theta$ into two nodes, $\theta_1$ and $\theta_2$, pointing to $Z$ and $W$, respectively. Finally, because the distributions are known in this example, we would say that $p(\theta_1)$ and $p(\theta_2)$ are both point masses at the true parameter values.}

Note that $\C_D \cup Pa_{\C_D} = \{Z, D, Y\}$; by Theorem \ref{thm1}, only these variables are necessary to draw from the posterior $U_{\C_D} | \DD$. While there are jointly only 8 possible values for these three variables, $N_\C = |F_D| \cdot |F_Y| = 16$, so our prior distribution on $\pi_\C$ is a 16-dimensional Dirichlet with parameter vector $\mathbf{1}$.

Figure \ref{ex1plot} shows the approximate posterior density of $ACE[D \rightarrow W]$. We used $N = 1000$ \textit{iid} simulated data points, $B = 5000$ burn-in iterations, and $M = 10^5$ draws from (the MCMC approximation to) $p(U_\C | \DD)$. $J = 10^5$ samples were drawn from each of $p(w | do(D = 0))$ and $p(w | do(D = 1))$; the sampled values were averaged to produce $10000$ draws from the posterior $p(ACE[D \rightarrow W] | \DD)$; the result was then smoothed with a kernel to generate the plot.

\begin{figure}[t]
  \centering
  \includegraphics[width=\textwidth]{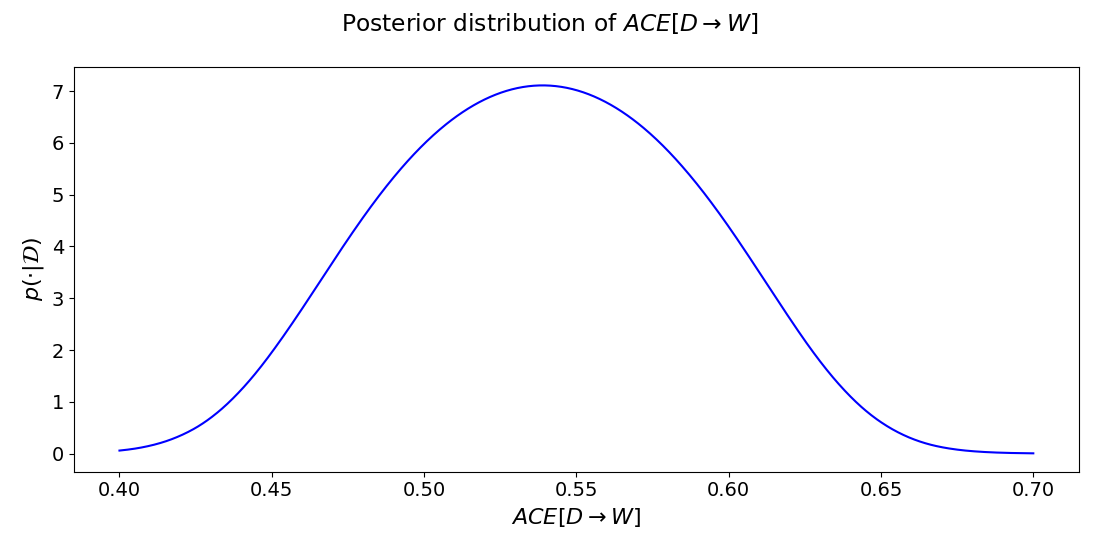}
  \caption{Posterior density of $ACE[D \rightarrow W]$. Whereas the prior distribution $\pi_\C \sim Dirichlet(\mathbf{1})$ induces a prior density on the $ACE$ that is symmetric about $0$, the posterior places significantly more probability mass near the true value.}
  \label{ex1plot}
\end{figure}

Unsurprisingly, most of the probability mass is located near the true $ACE$ value of $0.55$; either of the methods described in Section \ref{bounds} will capture the true value for any reasonable choice of $\alpha$.

\subsection{A More Complicated Model}
\label{complicated}
\begin{figure}
  \centering
  \begin{tikzpicture}[->,>=stealth',shorten >=1pt,auto,node distance=2.8cm,
                      semithick,scale=0.8,transform shape]
    \node[state]    (X)                    {$X$};
    \node[state]    (Z1) [below of=X]      {$Z_1$};
    \node[state]    (Z2) [right of=X]     {$Z_2$};
    \node[state]    (Y)  [right of=Z1]     {$Y$};
    
    \path (X)  edge node {} (Z1)
          (Z1) edge node {} (Y)
          (Z2) edge node {} (Y)
          (X) edge[dashed, <->] node {} (Z2)
          (Z1) edge[dashed, <->] node {} (Z2);
  \end{tikzpicture}
  \caption{The causal diagram $\GG$ used for Experiment 2. $p(y | do(x))$ is unidentifiable. \cite{causality2e}}
  \label{ex2cd}
\end{figure}

\begin{figure}
  \centering
  \begin{tikzpicture}[->,>=stealth',shorten >=1pt,auto,node distance=2.8cm,
                      semithick,scale=0.8,transform shape]
    \node[state]    (X)                    {$X$};
    \node[state]    (Z1) [below of=X]      {$Z_1$};
    \node[state]    (U)  [right of=X]      {$U_\C$};
    \node[state]    (Y)  [right of=Z1]     {$Y$};
    \node[state]    (Z2) [right of=U]      {$Z_2$};
    \node[state]    (t)  [right of=Y]      {$\theta$};
    \node[state]    (P)  [left of=X]      {$\pi_\C$};
    
    \path (X)  edge node {} (Z1)
          (Z1) edge node {} (Y)
          (t)  edge node {} (Y)
          (U)  edge node {} (X)
          (U)  edge node {} (Z1)
          (U)  edge node {} (Z2)
          (Z2)  edge node {} (Y)
          (P) edge [bend left] node {} (U);
  \end{tikzpicture}
  \caption{The modified causal diagram $\GG^*$ used for Experiment 2.}
  \label{ex2mcd}
\end{figure}
Figure \ref{ex2cd} shows a causal diagram from \cite{causality2e} where $p(y | do(x))$ is unidentifiable. We can divide this diagram into two c-components: $\CC = \{\{X, Z_1, Z_2\}, \{Y\}\}$.

We consider the case where $X$, $Z_1$, and $Z_2$ are all binary, and $Y | z_1, z_2 \sim Normal(\theta_{z_1, z_2}, 1)$, with \[\theta = \begin{bmatrix}
1 & 0\\
1 & 1.5
\end{bmatrix}. \]
Specifically, training data were generated according to the following process:
\begin{enumerate}
  \item Draw $U_1 \sim Beta(0.55, 0.45)$
  \item Draw $U_2 \sim Beta(0.35, 0.65)$
  \item $X \gets \mathbf{1}_{U_1 > 0.5}$
  \item $Z_1 \gets \mathbf{1}_{U_2 < 0.6} \oplus X$
  \item $Z_2 \gets \mathbf{1}_{U_1 < 0.4} \oplus \mathbf{1}_{U_2 > 0.3}$
  \item Draw $Y \sim Normal(\theta_{z_1, z_2}, 1)$
\end{enumerate}
In this example, we aim to get a $95\%$ bound for the quantity $E[Y | do(X = 0)]$. We generated $N = 1000$ \textit{iid} training examples $(x, z_1, z_2, y)$ using the process described above. $\C = \{X, Z_1, Z_2\}$ is canonicalizable, and therefore it's possible to sample from $p(y | do(x))$. We start by creating the modified diagram $\GG^*$ corresponding to our problem, which is shown in Figure \ref{ex2mcd}.

$N_\C = |F_X| \cdot |F_{Z_1}| \cdot |F_{Z_2}| = 2 \cdot 4 \cdot 2 = 16$, so there are 16 possible values for $U_\C$, and a corresponding 16-dimensional $Dirichlet(\mathbf{1})$ prior on $\pi_\C$. We treat $\theta$ in the model as a random $2 \times 2$ matrix (zero-indexed); we put independent priors on each of these: $\theta_{z_1, z_2} \sim Normal(0, 1)$.

We can now use Algorithm \ref{pi-post} to get $M = 1000$ draws from $p(\pi_\C | \DD)$. The elements of $\theta$ are all easily updated to posterior distributions due to prior conjugacy; therefore, we can also easily draw $M$ samples from $p(\theta | \DD)$. Now, we follow Algorithm \ref{v-post} to get $J = 1000$ draws from $p(y | do(X = 0))$ for each combination of $pi_\C$ and $\theta$.

The result is shown in Figure \ref{ex2plot}, and we see an associated 95\% central credible bound of $[0.81, 1.20]$, which contains the true value of $1.00$\footnote{The ``true value'' is actually also unknown, but was estimated with $10^4$ separate samples from the interventional distribution.}.

\begin{figure}[t]
  \centering
  \includegraphics[width=\textwidth]{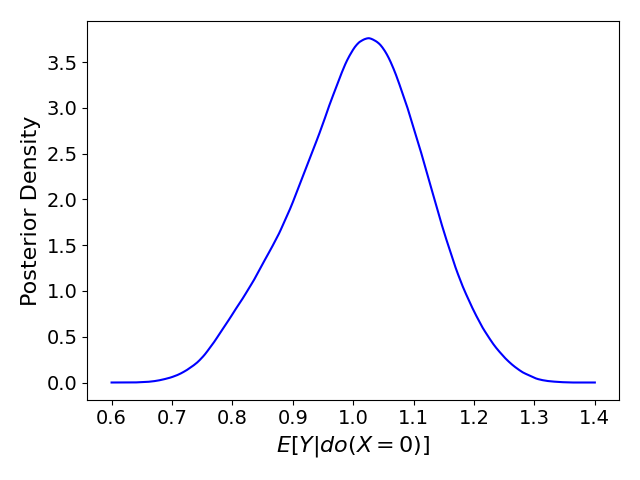}
  \caption{Posterior density of $E[Y | do(X = 0)]$ in Example 2.}
  \label{ex2plot}
\end{figure}
\section{Conclusion}
\label{conc}
We have proposed and demonstrated a novel method for sampling from a large class of interventional distributions, which can be used to find credible bounds on previously unmanageable causal quantities, including those with continuous and multivariate outcomes. We have demonstrated its use, and have also described how this approach can also be used for soft interventions.
\small
\bibliography{refs}

\begin{thebibliography}{15}
\providecommand{\natexlab}[1]{#1}
\providecommand{\url}[1]{\texttt{#1}}
\expandafter\ifx\csname urlstyle\endcsname\relax
  \providecommand{\doi}[1]{doi: #1}\else
  \providecommand{\doi}{doi: \begingroup \urlstyle{rm}\Url}\fi

\bibitem[Balke and Pearl(1997)]{doi:10.1080/01621459.1997.10474074}
Alexander Balke and Judea Pearl.
\newblock Bounds on treatment effects from studies with imperfect compliance.
\newblock \emph{Journal of the American Statistical Association}, 92\penalty0
  (439):\penalty0 1171--1176, 1997.
\newblock \doi{10.1080/01621459.1997.10474074}.
\newblock URL \url{https://doi.org/10.1080/01621459.1997.10474074}.

\bibitem[Bareinboim et~al.(2020)Bareinboim, Correa, Ibeling, and Icard]{pch}
E.~Bareinboim, J.~Correa, D.~Ibeling, and T.~Icard.
\newblock On pearl's hierarchy and the foundations of causal inference.
\newblock Technical Report R-60, Causal Artificial Intelligence Lab, Columbia
  University, Jul 2020.
\newblock In: ``Probabilistic and Causal Inference: The Works of Judea Pearl'',
  ACM Books, in press.

\bibitem[Chickering and Pearl(1997)]{cp1997}
D.M. Chickering and J.~Pearl.
\newblock A clinician's tool for analyzing non-compliance.
\newblock Technical Report R-241, UCLA Cognitive Systems Laboratory, 1997.

\bibitem[Cormen et~al.(2009)Cormen, Leiserson, Rivest, and Stein]{clrs}
Thomas~H. Cormen, Charles~E. Leiserson, Ronald~L. Rivest, and Clifford Stein.
\newblock \emph{Introduction to Algorithms, 3rd Edition}.
\newblock {MIT} Press, 2009.
\newblock ISBN 978-0-262-03384-8.

\bibitem[Correa and Bareinboim(2020)]{correa2020calculus}
J.~Correa and E.~Bareinboim.
\newblock A calculus for stochastic interventions: Causal effect identification
  and surrogate experiments.
\newblock In \emph{Proceedings of the 34th AAAI Conference on Artificial
  Intelligence}, New York, NY, 2020. AAAI Press.

\bibitem[Gelman et~al.(2013)Gelman, Carlin, Stern, Dunson, Vehtari, and
  Rubin]{gelman2013bayesian}
A.~Gelman, J.B. Carlin, H.S. Stern, D.B. Dunson, A.~Vehtari, and D.B. Rubin.
\newblock \emph{Bayesian Data Analysis, Third Edition}.
\newblock Chapman \& Hall/CRC Texts in Statistical Science. Taylor \& Francis,
  2013.
\newblock ISBN 9781439840955.

\bibitem[Huang and Valtorta(2006)]{10.5555/3020419.3020446}
Yimin Huang and Marco Valtorta.
\newblock Pearl's calculus of intervention is complete.
\newblock In \emph{Proceedings of the Twenty-Second Conference on Uncertainty
  in Artificial Intelligence}, UAI'06, page 217–224, Arlington, Virginia,
  USA, 2006. AUAI Press.
\newblock ISBN 0974903922.

\bibitem[Hyndman(1996)]{10.2307/2684423}
Rob~J. Hyndman.
\newblock Computing and graphing highest density regions.
\newblock \emph{The American Statistician}, 50\penalty0 (2):\penalty0 120--126,
  1996.
\newblock ISSN 00031305.
\newblock URL \url{http://www.jstor.org/stable/2684423}.

\bibitem[Kilbertus et~al.(2020)Kilbertus, Kusner, and
  Silva]{NEURIPS2020_e8b1cbd0}
Niki Kilbertus, Matt~J Kusner, and Ricardo Silva.
\newblock A class of algorithms for general instrumental variable models.
\newblock In H.~Larochelle, M.~Ranzato, R.~Hadsell, M.~F. Balcan, and H.~Lin,
  editors, \emph{Advances in Neural Information Processing Systems}, volume~33,
  pages 20108--20119. Curran Associates, Inc., 2020.

\bibitem[Pearl(2009)]{causality2e}
Judea Pearl.
\newblock \emph{Causality: Models, Reasoning and Inference}.
\newblock Cambridge University Press, USA, 2nd edition, 2009.
\newblock ISBN 052189560X.

\bibitem[Peters et~al.(2017)Peters, Janzing, and Sch\"olkopf]{eoci}
J.~Peters, D.~Janzing, and B.~Sch\"olkopf.
\newblock \emph{Elements of Causal Inference: Foundations and Learning
  Algorithms}.
\newblock MIT Press, Cambridge, MA, USA, 2017.

\bibitem[Samworth and Wand(2010)]{10.1214/09-AOS766}
R.~J. Samworth and M.~P. Wand.
\newblock {Asymptotics and optimal bandwidth selection for highest density
  region estimation}.
\newblock \emph{The Annals of Statistics}, 38\penalty0 (3):\penalty0 1767 --
  1792, 2010.
\newblock \doi{10.1214/09-AOS766}.
\newblock URL \url{https://doi.org/10.1214/09-AOS766}.

\bibitem[Tian and Pearl(2002)]{TianJin2002Agic}
Jin Tian and Judea Pearl.
\newblock A general identification condition for causal effects.
\newblock In \emph{Eighteenth national conference on artificial intelligence},
  pages 567--573. American Association for Artificial Intelligence, 2002.
\newblock ISBN 0262511290.

\bibitem[Zhang et~al.(2021)Zhang, Tian, and Bareinboim]{zhang2021pi}
J.~Zhang, J.~Tian, and E.~Bareinboim.
\newblock Partial identification of counterfactual distributions.
\newblock Technical Report R-78, Causal Artificial Intelligence Lab, Columbia
  University, Jun 2021.

\bibitem[Zhang and Bareinboim(2021)]{zhang2021bounding}
Junzhe Zhang and Elias Bareinboim.
\newblock Bounding causal effects on continuous outcome.
\newblock In \emph{Proceedings of the 35th AAAI Conference on Artificial
  Intelligence}, Vancouver, Canada, Feb 2021. AAAI Press.

\end{thebibliography}
\bibliographystyle{plainnat}
\end{document}